\definecolor{refcolor}{RGB}{0,0,190}
\begin{document}

\def\bibsection{\section*{\refname}} 
\newcommand{\pbref}[1]{\ref{#1} (\nameref*{#1})}

\def\({\left(}
\def\){\right)}

\newcommand{\sref}[1]{\ref{#1}}
\newcommand{\tn}{\textnormal}
\newcommand{\ds}{\displaystyle}
\newcommand{\dsfrac}[2]{\displaystyle{\frac{#1}{#2}}}
\newcommand{\boplus}{\textstyle{\bigoplus}}
\newcommand{\botimes}{\textstyle{\bigotimes}}
\newcommand{\bcup}{\textstyle{\bigcup}}
\newcommand{\bsqcup}{\textstyle{\bigsqcup}}
\newcommand{\bcap}{\textstyle{\bigcap}}

\newcommand{\hilbert}{\mathcal{H}}
\newcommand{\vectorspace}{\mathcal{V}}
\newcommand{\mc}[1]{\mathcal{#1}}
\newcommand{\ms}[1]{\mathscr{#1}}
\newcommand{\mf}[1]{\mathfrak{#1}}
\newcommand{\dU}{\wh{\mc{U}}}

\newcommand{\wh}[1]{\widehat{#1}}
\newcommand{\dwh}[1]{\wh{\rule{0ex}{1.3ex}\smash{\wh{\hfill{#1}\,}}}}

\newcommand{\wt}[1]{\widetilde{#1}}
\newcommand{\wht}[1]{\widehat{\widetilde{#1}}}
\newcommand{\on}[1]{\mathrm{#1}}

\newcommand{\qmU}{$\mathscr{U}$}
\newcommand{\qmR}{$\mathscr{R}$}
\newcommand{\qmUR}{$\mathscr{UR}$}
\newcommand{\qmDR}{$\mathscr{DR}$}
\newcommand{\R}{\mathbb{R}}
\newcommand{\Z}{\mathbb{Z}}
\newcommand{\K}{\mathbb{K}}
\newcommand{\N}{\mathbb{N}}
\newcommand{\Prj}{\mathcal{P}}
\newcommand{\abs}[1]{|#1|}

\newcommand{\de}{\mathrm{d}}
\newcommand{\De}{{\scriptstyle \mc{D}}}
\newcommand{\tr}{\mathrm{tr}}
\newcommand{\im}{\mathrm{Im}}
\newcommand{\vol}{\mathrm{vol}}
\newcommand{\ie}{i.e.,\ }
\newcommand{\vs}{vs.\ }
\newcommand{\eg}{e.g.,\ }
\newcommand{\cf}{cf.\ }
\newcommand{\etc}{etc}
\newcommand{\etal}{et al.}

\newcommand{\Span}{\tn{span}}
\newcommand{\pde}{PDE}
\newcommand{\SU}{\tn{SU}}
\newcommand{\GL}{\tn{GL}}
\newcommand{\su}{\tn{su}}
\newcommand{\SO}{\tn{SO}}

\newcommand{\schrod}{Schr\"odinger}
\newcommand{\bra}[1]{\langle#1|}
\newcommand{\ket}[1]{|#1\rangle}
\newcommand{\kett}[1]{|\!\!|#1\rangle\!\!\rangle}
\newcommand{\proj}[1]{\ket{#1}\bra{#1}}
\newcommand{\braket}[2]{\langle#1|#2\rangle}
\newcommand{\expectation}[1]{\langle#1\rangle}
\newcommand{\Herm}{\tn{Herm}}
\newcommand{\meanvalue}[2]{\langle{#1}\rangle_{#2}}
\newcommand{\Prob}{\tn{Prob}}
\newcommand{\pobs}[1]{\mathsf{#1}}
\newcommand{\obs}[1]{\wh{\pobs{#1}}}
\newcommand{\uop}[1]{\wh{\mathbf{#1}}}

\newcommand{\D}{\mathbf{D}}
\newcommand{\dn}{\tn{d}}
\newcommand{\db}{\mathbf{d}}
\newcommand{\n}{\mathbf{n}}
\newcommand{\m}{\mathbf{m}}

\newcommand{\h}{\mathbf{(2\pi\hbar)}}
\newcommand{\x}{\mathbf{x}}
\newcommand{\y}{\mathbf{y}}
\newcommand{\z}{\mathbf{z}}
\newcommand{\q}{\mathbf{q}}
\newcommand{\p}{\mathbf{p}}
\newcommand{\xThree}{\boldsymbol{x}}
\newcommand{\yThree}{\boldsymbol{y}}

\newtheorem{theorem}{Theorem}
\newtheorem{proposition}{Proposition}
\newtheorem{lemma}{Lemma}
\newtheorem{corollary}{Corollary}
\newtheorem{Postulate}{Postulate}

\newcounter{rule}
\setcounter{rule}{0}
\newtheorem{Rule}{Rule}

\newcounter{question}
\setcounter{question}{0}
\newtheorem{Question}{Question}

\newcounter{observation}
\setcounter{observation}{0}
\newtheorem{Observation}{Observation}

\theoremstyle{definition}
\newtheorem{definition}{Definition}

\newcounter{objection}
\setcounter{objection}{0}
\newtheorem{Objection}{Objection}

\newcounter{reply}
\setcounter{reply}{0}
\newtheorem{Reply}{Reply}

\title{The Relation between Wavefunction and 3D Space Implies Many Worlds with Local Beables and Probabilities}

\author{Ovidiu Cristinel Stoica}
\affiliation{
 Dept. of Theoretical Physics, NIPNE---HH, Bucharest, Romania. \\
	Email: \href{mailto:cristi.stoica@theory.nipne.ro}{cristi.stoica@theory.nipne.ro},  \href{mailto:holotronix@gmail.com}{holotronix@gmail.com}
	}%
	
\date{17 January 2023}

\begin{abstract}
We show that the quantum wavefunctional can be seen as a set of classical fields on the 3D space aggregated by a measure. We obtain a complete description of the wavefunctional in terms of classical local beables. With this correspondence, classical explanations of the macro level and of probabilities transfer almost directly to the quantum. A key difference is that, in quantum theory, the classical states coexist in parallel, so the probabilities come from self-location uncertainty. We show that these states are distributed according to the Born rule. The coexistence of classical states implies that there are many worlds, even if we assume the collapse postulate. This leads automatically to a new version of the many-worlds interpretation in which the major objections are addressed naturally. We show that background-free quantum gravity provides additional support for this proposal and suggests why  branching happens toward the future.
\end{abstract}

\keywords{wavefunction; 3D space; many-worlds interpretation; Born rule; branch counting; wavefunctional formulation of quantum field theory; quantum gravity; background-independence} 

\maketitle

\section{Introduction}
\label{s:intro}

This article explores the relationship between the wavefunction and $3$D space in quantum mechanics.
This relation should be clarified because even in nonrelativistic quantum mechanics (NRQM), the wavefunction is not a function defined on $3$D space but on the higher dimensional configuration space. Apparently, the situation does not seem to improve in more sophisticated theories, such as quantum field theory (QFT) or quantum gravity (QG).
We will see that the answer to this question touches several foundational questions in quantum mechanics and suggests that a version of the many-worlds interpretation gives the answers.

It is important to understand the wavefunction in terms of fundamental entities having a clear $3$D space ontology, \ie entities that are \emph{in} or \emph{on} $3$D space.
J.S. Bell calls such entities \emph{local beables} \cite{Bell2004SpeakableUnspeakable}.
We will work with quantum fields in the \emph{wavefunctional formulation of quantum field theory}. Because the configuration space consists of fields instead of positions, the wavefunction is replaced by a wavefunctional.
In Sections \sref{s:3d-space} and \sref{s:ontology}, we will see how the wavefunctional has a natural interpretation as many classical fields on $3$D space aggregated by a measure. This answers the following
\begin{Question}
\label{q:local-beables}
Can the wavefunction encode local beables or be described in terms of them?
\end{Question}

We use this in Section \sref{s:ontology} to propose answers to the following related question:

\begin{Question}
\label{q:ontology}
What is the ontology of the wavefunction?
\end{Question}

The answer is ``a set of classical fields aggregated by a measure''.
The phases become absorbed in the $\text{U}(1)$ gauges of the classical fields, so this also addresses the question:

\begin{Question}
\label{q:complex}
Why is the wavefunction a complex function?
\end{Question}

The classical fields determine a basis of the Hilbert space. Because they fully consist of local beables, we call the states from the resulting basis \emph{ontic states}. The ontic states are compatible with the macrostates in which the universe is observed to be.
Then, building on Sections \sref{s:3d-space} and \sref{s:ontology}, Section \sref{s:classicality} deals with the questions:

\begin{Question}
\label{q:3d-objects}
How do $3$D objects in space arise from the wavefunction?
\end{Question}

\begin{Question}
\label{q:macroclassical}
Why does the world look classical at the macroscopic level?
\end{Question}

Most wavefunctionals describe macroscopic superpositions.
Therefore, to answer Question \ref{q:macroclassical}, it is important to understand which of the wavefunctionals do not describe such superpositions. In classical physics, this problem does not exist precisely because all classical entities are local beables. This indicates that local beables should give the answer in quantum theory too. We argue that it does: microstates have to belong to a basis (determined by the classical fields) whose states will be called \emph{ontic}.

In classical physics, since ultimately the results of experiments are examined at the macro level, the fact that a macrostate corresponds to more possible microstates is the key to explain how probabilities arise.
This is a problem in quantum theory:
\begin{Question}
\label{q:probabilities}
How do probabilities arise in quantum theory?
\end{Question}

The key difference is that in quantum theory, the sample space seems to depend on the experiment.
In Section \sref{s:probability}, we will see that, in quantum theory, the relation between wavefunctional and $3$D space leads to a unique sample space for all experiments if we understand that, ultimately, all observations are macroscopic. Therefore, the answer is similar to the classical one, but thinking that subsystems are separate systems obfuscates this because the ontic basis only exists for the total system, not for subsystems.

We will see that, while in classical physics, probabilities describe the agent's ignorance of the actual microstate of the system, in quantum theory, they represent the ignorance of the agent's self-location in one of many microstates.
This leads to a derivation of the Born rule and the meaning of probabilities by ``counting'' the ontic states per macrostate. 

In Section \sref{s:collapse}, it is shown that if we assume wavefunction collapse, probabilities encounter severe difficulties.
Whether we assume wavefunction collapse or not, multiple ontic states have to exist simultaneously. This suggests as the natural interpretation a version of the \emph{many-worlds interpretation} (MWI) \cite{Everett1957RelativeStateFormulationOfQuantumMechanics,deWittGraham1973ManyWorldsInterpretationOfQuantumMechanics,Wallace2012TheEmergentMultiverseQuantumTheoryEverettInterpretation} that results from this analysis. This~addresses

\begin{Question}
\label{q:interpretation}
How should we interpret quantum mechanics?
\end{Question}

This version of MWI includes probabilities in the classical sense due to the distribution of microstates per macrostate rather than by simply interpreting the squared norm of the state vector as a probability, as it is often proposed.
In Section \sref{s:what-counts}, our derivation of the Born rule is compared with other possible ways to count microstates or worlds.

In Section \sref{s:background-freedom}, we will see that strong additional support for these findings comes from background-free quantum gravity (which includes most approaches to QG). In the background-free approaches, most linear combinations of states with different $3$D geometry cannot represent superpositions. This leads to the dissociation of the state into states with different classical geometries, practically forcing upon us a new version of MWI.

When applied to the Big Bang, this dissociation effect suggests an answer to the time asymmetry of the branching structure problem of the MWI (Section \sref{s:big-bang}):
\begin{Question}
\label{q:branching-asymmetry}
Why does branching happen toward the future and not also toward the past?
\end{Question}

These results address the major objections against the MWI, in a very conservative and classical-like manner.
The big picture resulting from this analysis will be discussed in Section \sref{s:conclusions}.

Several technical details were relegated in  Appendices \sref{appendix:3d-space:wavefunction}-- \sref{appendix:thm:overcoungting:proof} to simplify the article. 

\section{The Wavefunctional and the 3D Space}
\label{s:3d-space}

Let $\Sigma$ be the $3$D space, which is usually a manifold. If we ignore the curvature due to gravity, we can assume that $\Sigma=\R^3$, but this works for any $3$D manifold and even for discrete structures.

Intuitively, we expect that an object is in $3$D space if it can be seen as consisting of parts, each of them having a definite position in $3$D space. For example, a function or a field defined on a space can be recovered from its values at different positions.

Strictly speaking, a point or set of points from $3$D space is \emph{in} $3$D space. A classical field $\phi$ is \emph{on} $3$D space, in the sense that $\phi$ is a function on the $3$D space $\Sigma$, $\phi:\Sigma\to\mc{S}$, where $\mc{S}$ is a set in which the field is valued. For example, $\mc{S}$ can be $\R$ or $\mathbb{C}$ for real or complex scalar fields, $\R^3$ for real vectors, \etc.
More generally, a field is a \emph{section} in a \emph{fiber bundle} over $\Sigma$.
For example, the field $\phi:\Sigma\to\mc{S}$ is a section of the trivial bundle $\Sigma\times\mc{S}\stackrel{\pi_1}{\mapsto}\Sigma$, where $\pi_1$ is the projection on $\Sigma$. The field $\phi$ is a section in the sense that $\pi_1\circ\phi=1_{\Sigma}$, where $1_{\Sigma}$ is the identity map of $\Sigma$.

In Appendix \sref{appendix:3d-space:wavefunction}, it is explained that the wavefunction is, in fact, an object of $3$D space geometry, and that it can even be faithfully represented as infinitely many fields on $\Sigma$ that have a local Hamiltonian evolution.
However, the representation that will be used in this article comes directly  and naturally from quantum field theory (QFT).


In the {\schrod} \emph{wavefunctional formulation} of QFT, the configuration space $\mc{C}$ consists of classical fields \cite{Hatfield2018QuantumFieldTheoryOfPointParticlesAndStrings}. Therefore, instead of a wavefunction, one uses a function of functions or fields, a wavefunctional $\Psi[\phi]$, $\Psi:\mc{C}\to\mathbb{C}$.

There are more types of classical fields to be quantized, which can be scalar, spinor, vector, or tensor fields. They can also have internal degrees of freedom, corresponding to the internal spaces of gauge symmetries. 
Let $\phi=(\phi_1,\ldots,\phi_\n)$ contain all the components of all these fields. The operators $\wh{\phi}_j(\xThree)$ act by multiplication with $\phi_j(\xThree)$. Their canonical conjugates are the functional derivatives $\wh{\pi}_j(\yThree):=-i\hbar\delta/\delta\phi_j(\yThree)$. They satisfy the canonical commutation relations if they are bosonic and the canonical anticommutation relations if they are fermionic, in which case they are Grassmann numbers. We assume that the manifold $\mc{C}$ is endowed with a measure $\mu$ (see Appendix \sref{appendix:measure} for a discussion of its existence).

The Hilbert space $\hilbert$ consists of the $\mu$-measurable functionals $\Psi:\mc{C}\to\mathbb{C}$ that are square-integrable with respect to the measure $\mu$,
\begin{equation}
\label{eq:hilbert}
\hilbert:=L^2(\mc{C},\mu,\mathbb{C}).
\end{equation}

The state vectors labeled by $\phi\in\mc{C}$ form an orthogonal basis $(\ket{\phi})_{\phi\in\mc{C}}$ so that, for any compact-supported continuous functional $\Psi:\mc{C}\to\mathbb{C}$,
\begin{equation}
\label{eq:basis}
\int_{\mc{C}}\braket{\phi}{\phi'} \Psi[\phi']\De\mu(\phi') = \Psi[\phi].
\end{equation}

The time evolution of the universe is governed by the {\schrod} equation:
\begin{Postulate}[Unitary evolution]
\label{pp:evol}
The state of the universe can be represented by a unit vector $\ket{\Psi(t)}\in\hilbert$, whose evolution is described by the equation
\begin{equation}
\label{eq:unitary_evolution}
\ket{\Psi(t)}=\uop{U}_{t,t_0}\ket{\Psi(t_0)}.
\end{equation}
\end{Postulate}

Here, the \emph{unitary evolution operator} $\uop{U}_{t,t_0}:=e^{-\frac{i}{\hbar}(t-t_0)\obs{H}}$ between the times $t_0$ and $t$ is determined by the time-independent selfadjoint operator $\obs{H}$, called the \emph{Hamiltonian}.

The Hamiltonian operator acts locally in $3$D space \cite{Hatfield2018QuantumFieldTheoryOfPointParticlesAndStrings}.
The wavefunctional formulation allows the recovery of the usual formulation of QFT in terms of operator-valued distributions and of the Fock representation \cite{Hatfield2018QuantumFieldTheoryOfPointParticlesAndStrings}.

The wavefunctional $\Psi$ can be understood naturally as consisting of a number $\abs{\mc{C}}$ (usually infinite) of fields on $3$D space of the form $\(\phi,c_{\phi}\)$, where $\phi\in\mc{C}$, $\phi:\Sigma\to\mathbb{C}$ is a classical field from $\mc{C}$, $c_{\phi}:=\Psi[\phi]$ is constant in space, and $\abs{\mc{C}}$ is the cardinal of $\mc{C}$.
Then, $\Psi$ is equivalent to a classical field $\varPsi:\Sigma\to\mathbb{C}^{2\abs{\mc{C}}}$ on the $3$D space $\Sigma$,
\begin{equation}
\label{eq:psi-rep-field}
\varPsi(\xThree)=\(\phi(\xThree),c_{\phi}\)_{\phi\in\mc{C}}.
\end{equation}

This representation follows directly from the wavefunctional formulation.
In the next section, we will see how the phase of $c_{\phi}$ can be absorbed in $\phi$ and that this allows us to interpret the microstate of the universe as a classical field with a given gauge. We will see that $\Psi$ can be understood as a densitized set of gauge classical fields.
For this reason, we call the basis $(\phi)_{\phi\in\mc{C}}$ the \emph{ontic basis}.

\section{The Wavefunction'S Ontology: A Densitized Set of Classical Worlds}
\label{s:ontology}

Let us write down the wavefunctional $\Psi$ in polar form with $r[\phi]\geq 0$,
\begin{equation}
\label{eq:polar-form}
\Psi=\int_{\mc{C}}r[\phi]e^{i \theta[\phi]}\ket{\phi}\De\mu[\phi].
\end{equation}

We assume that there is a global $\text{U}(1)$ gauge symmetry so that at least one of the classical fields $\phi_j$, $j\in\{1,\ldots,\n\}$ transforms nontrivially under global $\text{U}(1)$ gauge transformations.
We know that this is true in our universe because there are always electromagnetic potentials and Dirac fields.
Although $\text{U}(1)$ acts differently on different types of fields, for simplicity, we denote by $\phi\mapsto e^{i \theta}\phi$ the gauge transformation of the classical field $\phi$.
Then, $\phi\neq e^{i \theta}\phi$ for any $\theta$ that is not an integer multiple of $2\pi$.

A global gauge transformation of a classical field $\phi$ results in a physically equivalent field $e^{i \theta}\phi$.
On the other hand, a multiplication of the vector $\ket{\phi}$ with $e^{i \theta}$ results in a physically equivalent vector $e^{i \theta}\ket{\phi}$.
Then, without loss of consistency, we can identify
\begin{equation}
\label{eq:phase-gauge}
e^{i \theta[\phi]}\ket{\phi}=\ket{e^{i \theta[\phi]}\phi}.
\end{equation}

In other words, a phase change in $\ket{\phi}$ is made equivalent to a $\text{U}(1)$ gauge transformation of $\phi$.
This is physically consistent because the physical state remains unchanged under these transformations.
The commutative diagram \eqref{eq:gauge-phase-equiv} summarizes this.
\begin{equation}
\label{eq:gauge-phase-equiv}
\begin{tikzcd}[column sep=9 em,row sep=5.5 em,
 /tikz/column 2/.append style={anchor=base east}]
\phi_{\gamma} \arrow[d,"\textnormal{quantization}" {anchor=south, rotate=-90}]
\arrow[d,"\textnormal{quantization}" {anchor=south, rotate=-90},shift left=16.5em]
 \arrow[r,mapsto,"\textnormal{gauge transformation}"]
 & e^{i \theta}\phi_{\gamma}\ \, \\
\left|\phi_{\gamma}\right\rangle \arrow[r,mapsto,"\textnormal{phase transformation}"] & e^{i \theta}\left|\phi_{\gamma}\right\rangle=\left|e^{i \theta}\phi_{\gamma}\right\rangle
\end{tikzcd}
\end{equation}

Since $\text{U}(1)\cong\SO(2,\R)$, the complex numbers $e^{i \theta[\phi]}$ from Equation~\eqref{eq:polar-form} can be interpreted as real gauge transformations, answering Question \ref{q:complex}.

Since the configuration space $\mc{C}$ was constructed by fixing a gauge, a gauge transformation leads to a different configuration space $\wt{\mc{C}}$ and a different ontic basis $(\wt{\phi})_{\phi\in\wt{\mc{C}}}$,
\begin{equation}
\label{eq:phi_tilde}
\wt{\phi}:=e^{i \theta[\phi]}\phi\tn{ and }\wt{\mc{C}}:=\{\wt{\phi}|\phi\in\mc{C}\}.
\end{equation}

However, Equation~\eqref{eq:phase-gauge} shows that the resulting Hilbert space is independent of the gauge coefficients $\theta[\phi]$ from Equation~\eqref{eq:phi_tilde} that define the configuration space $\wt{\mc{C}}$, $\hilbert=L^2(\wt{\mc{C}},\mu,\mathbb{C})$.

On the other hand, from Equations \eqref{eq:phase-gauge} and \eqref{eq:phi_tilde}, $\Psi$ in the form from Equation~\eqref{eq:polar-form} becomes a real functional in the basis $(\wt{\phi})_{\phi\in\wt{\mc{C}}}$ because the phases are absorbed,
\begin{equation}
\label{eq:polar-form-real}
\Psi=\int_{\wt{\mc{C}}}r[\wt{\phi}]\ket{\wt{\phi}}\De\mu[\wt{\phi}].
\end{equation}

Since $r[\phi]$ has to be a $\mu$-measurable function on $\mc{C}$, there is a measure $\wt{\mu}$ so that
\begin{equation}
\label{eq:measure_r}
\De\wt{\mu}=r[\phi]\De\mu[\phi].
\end{equation}

Then, from Equations \eqref{eq:phase-gauge} and \eqref{eq:measure_r}, Equation~\eqref{eq:polar-form} becomes
\begin{equation}
\label{eq:polar-form_r}
\Psi=\int_{\wt{\mc{C}}}\ket{\wt{\phi}}\De\wt{\mu}[\wt{\phi}].
\end{equation}

Equation~\eqref{eq:phi_tilde} explains complex numbers in quantum theory, addressing Question \ref{q:complex}.
Representation \eqref{eq:polar-form_r} addresses Question \ref{q:ontology}, by suggesting the following ontology of the wavefunctional: it consists of ontic states combined according to a density.

\section{The World Appears Classical at the Macroscopic Level}
\label{s:classicality}

At the macroscopic level, the observers have imperfect ``resolution'' so that states that are microscopically different cannot be distinguished.
We assume that this defines an equivalence relation of states.
Classical macrostates are equivalence classes of classical states from the configuration space $\mc{C}$, and they form a (disjoint) partition of $\mc{C}$,
\begin{equation}
\label{eq:macrostates-classical}
\mc{C}=\bigsqcup_{\alpha\in\mc{A}}\mc{C}_{\alpha}.
\end{equation}

This induces a direct sum decomposition of the Hilbert space $\hilbert$ defined in \mbox{Equation~\eqref{eq:hilbert}},

\begin{equation}
\label{eq:}
\hilbert=\bigoplus_{\alpha\in\mc{A}}\hilbert_{\alpha},\hilbert_{\alpha}:=L^2(\mc{C}_{\alpha},\mu,\mathbb{C}).
\end{equation}

\begin{definition}
\label{def:classicality}
In the following, the subspace $\hilbert_{\alpha}$ will represent \emph{macrostates}.
The states represented by vectors from macrostates will be called \emph{quasiclassical} states.
Projectors $\obs{P}_{\alpha}$ on subspaces representing macrostates, so that $\hilbert_{\alpha}=\obs{P}_{\alpha}\hilbert$, will be called \emph{macroprojectors}.
\end{definition}

\begin{Postulate}[Macroclassicality]
\label{pp:macroclassicality}
(i) If the state of the universe is $\ket{\Psi}$, the world is observed to be in a macrostate $\obs{P}_{\alpha}\hilbert$ for which $\obs{P}_{\alpha}\ket{\Psi}\neq 0$.
\end{Postulate}

It is useful to detail how Postulate \ref{pp:macroclassicality} applies to quantum measurements.
Let $\hilbert_{S}$ be the Hilbert space of the observed system.
Let $\obs{A}$ be a Hermitian operator on $\hilbert_{S}$, representing the observable of interest, with eigenbasis 
$(\psi^{\pobs{A}}_{1},\ldots,\psi^{\pobs{A}}_{n})$.
To indicate the result of a measurement, the measuring device contains a \emph{pointer}, which is readable at the macroscopic level and can be found in one of the eigenstates $(\zeta^{\pobs{A}}_{0},\zeta^{\pobs{A}}_{1},\ldots,\zeta^{\pobs{A}}_{n})$ of the \emph{pointer observable} $\obs{Z}^{\pobs{A}}$. 
Let $\zeta^{\pobs{A}}_{0}$ represent the ``ready'' state of the pointer, and $\ket{\psi}$ the state of the observed system before the measurement.
If the measurement of $\obs{A}$ takes place between $t_0$ and $t_1$, Equation~\eqref{eq:unitary_evolution} leads to a linear combination involving pointer states,
\begin{equation}
\label{eq:measurement}
\ket{\Psi(t_1)}
=\uop{U}_{t_1,t_0}\ket{\psi}\otimes\ket{\zeta^{\pobs{A}}_{0}}\otimes\ldots 
=\sum_{j}\braket{\psi^{\pobs{A}}_{j}}{\psi}\ket{\psi^{\pobs{A}}_{j}}\otimes \ket{\zeta^{\pobs{A}}_{j}}\otimes\ldots
\end{equation}

Since the pointer eigenstates are macroscopically distinguishable, the states $\ket{\psi^{\pobs{A}}_{j}}\otimes \ket{\zeta^{\pobs{A}}_{j}}\otimes\ldots$ are quasiclassical and correspond to distinct macrostates.
Therefore, a state containing the pointer in an eigenstate of $\obs{Z}^{\pobs{A}}$ is quasiclassical, as stated in Postulate \ref{pp:macroclassicality}.

The possible resulting states of the universe are not determined by the eigenstates of the observed system nor by those of the pointer of the measuring device. A pointer is a macroscopic object, and it corresponds to the macrostates of the universe, but each macrostate consists of a continuum of microstates from $(\ket{\phi})_{\phi\in\mc{C}}$.
The world should be in a definite ontic state.
This suggests the following
\begin{Postulate}[Microstates]
\label{pp:microstates}
Only the ontic states $(\ket{\phi})_{\phi\in\mc{C}}$ can be microstates.
\end{Postulate}

Postulate \ref{pp:microstates} is consistent with Postulate \ref{pp:macroclassicality}, because the classical states $\ket{\phi}$ are also quasiclassical, since each $\phi$ belongs to a unique macrostate $\wt{\mc{C}}_{\alpha}$.
It also clarifies Postulate \ref{pp:macroclassicality}: the world looks classical because its microstates are classical ontic states.
Since the ontic states consist of objects in $3$D space, this addresses Questions \ref{q:3d-objects} and \ref{q:macroclassical}.

In standard quantum mechanics (SQM), the Projection Postulate was introduced to explain why we observe only one of the states $\ket{\psi^{\pobs{A}}_{j}}\otimes \ket{\zeta^{\pobs{A}}_{j}}\otimes\ldots$.
The Projection Postulate was given in terms of quantum measurements  \cite{vonNeumann1955MathFoundationsQM,Dirac1958ThePrinciplesOfQuantumMechanics}.
Here, we replaced the Projection Postulate with Postulate \ref{pp:macroclassicality}, which 
\begin{itemize}
	\item 
is more general, including measurements as particular cases,
	\item 
avoids presuming whether the wavefunction collapses or not,
	\item 
relates the macrostates to microstates of the form $\ket{\phi}$, where $\phi\in\mc{C}$ have clear relations with $3$D space.
\end{itemize}

The probabilities are given by the Born rule:
\begin{Rule}[Born rule]
\label{rule:born-rule}
If the state of the universe is represented by $\ket{\Psi}$, the probability that an observation of the world finds it in the macrostate $\obs{P}_{\alpha}\hilbert$ is
\begin{equation}
\label{eq:born_rule}
P_{\alpha}=\bra{\Psi}\obs{P}_{\alpha}\ket{\Psi}.
\end{equation}
\end{Rule}

From Equation~\eqref{eq:polar-form_r} $\obs{P}_{\alpha}\ket{\Psi}=\int_{\wt{\mc{C}}_{\alpha}}\ket{\wt{\phi}} \De\wt{\mu}[\wt{\phi}]$, therefore,
\begin{equation}
\label{eq:check-normalized}
\left|\int_{\wt{\mc{C}}_{\alpha}}\ket{\wt{\phi}} \De\wt{\mu}[\wt{\phi}]\right|^2=\bra{\Psi}\obs{P}_{\alpha}\ket{\Psi}.
\end{equation}

This is not yet a proof of the Born rule. In  SQM, the Born rule is postulated, but in Section \sref{s:probability}, we will derive it based on the relation between Postulates \ref{pp:macroclassicality} and \ref{pp:microstates}.

\section{Naive Counting Gives the Born Rule in the Continuous Limit}
\label{s:probability}

Suppose Alice asks Bob to participate in the following experiment.
Alice instructs Bob to wait until a bell rings and as soon as the bell rings, to push a button. The button stops a stopwatch, and Bob, without reading it, has to guess whether the stopwatch indicates an even or an odd number for the millisecond.

A way to interpret the probability that Bob assigns to the event is that the state of the universe contains the state of the stopwatch, including its property that the millisecond is an even or an odd number.
Bob does not know the state of the world, but he can attribute the probability $1/2$ to the event that the millisecond is even. This \emph{subjective probability} is based on the incomplete knowledge of the state of the system.

Another interpretation is that Bob is a succession of infinitely many instances, one for each moment of time.
There is an instance of Bob which stops the stopwatch as a result of (a previous instance of Bob) hearing the bell ringing.
Then, (a subsequent instance of) Bob can interpret the probability as representing the odds that his instance that pressed the button was located along the time axis in an interval labeled by an even or an odd number representing the millisecond. This is the \emph{self-location probability} of Bob in time.

In the example with the stopwatch, both the subjective view and the self-location view are valid. However, an adept of \emph{presentism} may prefer the subjective view, while an adept of \emph{eternalism} may prefer the self-location view of probability.

Now consider an experiment in which Alice sends Bob a qubit in the state $1/\sqrt{2}$ $\(\ket{0}+\ket{1}\)$, asking him to determine whether the qubit's state is $\ket{0}$ or $\ket{1}$.
The probability that Bob determines that the qubit is in the state $\ket{1}$ is $1/2$.
However, the subjective view applies if the wavefunction collapses, while if both worlds exist, the probability comes from Bob's ignorance of whether he is the Bob instance in the world where the result is $\ket{0}$ or the one in which the result is $\ket{1}$, so the self-location view applies.

Now, let $(\ket{\phi_k})_{k\in\{1,\ldots,n\}}$ be orthonormal eigenvectors of the operator $\obs{A}$ representing the observable, and $\hilbert_S$ the observed system's Hilbert space.
Or $(\ket{\phi_k})_{k\in\{1,\ldots,n\}}$ can be an orthogonal system of quasiclassical states, and $\obs{A}$ a macroscopic observable.
Then, if 
\begin{equation}
\label{eq:finite_case_psi}
\ket{\psi}=\frac{1}{\sqrt{n}}\sum_{k=1}^n\ket{\phi_k}
\end{equation}
is the state vector of the observed system, and $\obs{P}_j$ is the projector of the eigenspace corresponding to the eigenvalue $\lambda_j$, the Born rule coincides with counting states:
\begin{equation}
\label{eq:finite_case_proof}
\bra{\psi}\obs{P}_j\ket{\psi}
=\frac{1}{n}\sum_{\ket{\phi_k}\in\obs{P}_j\hilbert_S}\braket{\phi_k}{\phi_k} = \frac{n_j}{n},
\end{equation}
where $n_j$ is the number of the eigenbasis vectors $\ket{\phi_k}$ that are eigenvectors for $\lambda_j$.

However, this ``naive state counting'' does not give the right probabilities because it coincides with the Born rule only in this special situation. In general, the coefficients in Equation~\eqref{eq:finite_case_psi} are distinct complex numbers, and counting them will give a different probability from the Born rule.
For this reason, in the standard versions of MWI it was proposed to interpret self-location uncertainty as being given by the squared amplitude and not simply by counting \cite{McQueenVaidman2019DefenceOfSelfLocationUncertaintyAccountOfProbabilityInMWI}, and even that this should be postulated \cite{Vaidman2022WhyTheManyWorldsInterpretation}.

However, the worlds are not determined by the vectors $\ket{\phi_k}$.
What is naive about the ``naive self-location view'' is to count the eigenstates of the observed system or of the pointer state as worlds in which the observer can be located.
The full ontic states should be counted, and an agent should be in a definite ontic state.
Self-location should be about the possible ontic states of the universe, which are $(\ket{\phi})_{\phi\in\mc{C}}$ (Postulate \ref{pp:microstates}).

Moreover, while counting states works only for states of the form \eqref{eq:finite_case_psi}, in the continuous limit, it works for all states $\ket{\Psi}\in\hilbert$. However, counting should be applied to the whole system, not to its parts (Postulate \ref{pp:macroclassicality}), and only to ontic states (Postulate \ref{pp:microstates}). 

\begin{theorem}
\label{thm:born-rule}
The Born rule is obtained as the continuous limit of counting ontic states.
\end{theorem}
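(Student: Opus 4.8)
The plan is to reduce the continuous statement to the finite, equal-amplitude identity already established in Equation~\eqref{eq:finite_case_proof}, and then pass to the limit. First I would start from the real ontic-basis representation $\ket{\Psi}=\int_{\wt{\mc{C}}}r[\wt{\phi}]\ket{\wt{\phi}}\De\mu[\wt{\phi}]$ of Equation~\eqref{eq:polar-form-real}, in which the phases have been absorbed and $r\ge 0$. Using the reproducing property~\eqref{eq:basis}, the Born weight of a macrostate is $\bra{\Psi}\obs{P}_{\alpha}\ket{\Psi}=\int_{\mc{C}_{\alpha}}r[\phi]^2\,\De\mu[\phi]$, so the target is to show that counting ontic states reproduces exactly this $r^2$-weighted measure.

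Next I would discretize. Choose a measurable partition $\mc{C}=\bigsqcup_i B_i$ into cells so small that $r$ is essentially constant, $r\approx r_i$, on each $B_i$. Setting $\ket{\chi_i}:=\mu(B_i)^{-1/2}\int_{B_i}\ket{\phi}\,\De\mu[\phi]$ gives an orthonormal family for which $\ket{\Psi}\approx\sum_i c_i\ket{\chi_i}$ with $c_i\approx r_i\,\mu(B_i)^{1/2}$, hence $\abs{c_i}^2\approx r_i^2\,\mu(B_i)$. The amplitudes $c_i$ are not equal, so Equation~\eqref{eq:finite_case_proof} does not yet apply; this is exactly the obstruction flagged after Equation~\eqref{eq:finite_case_proof}.

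The decisive step is fine-graining. Because each cell $B_i$ contains a continuum of ontic states of (approximately) the common amplitude $r_i$, the part of $\ket{\Psi}$ supported on $B_i$ is itself an (approximately) uniform superposition of orthonormal states obtained by subdividing $B_i$; the number of equal-amplitude sub-cells needed to represent $c_i\ket{\chi_i}$ is proportional to $\abs{c_i}^2\propto r_i^2\,\mu(B_i)$. Assembling the sub-cells of all the $B_i$ into one family $(\ket{\eta_s})_s$ expresses $\ket{\Psi}\approx M^{-1/2}\sum_s\ket{\eta_s}$ as a single uniform superposition of $M$ orthonormal ontic states. Now Equation~\eqref{eq:finite_case_proof} applies verbatim: counting returns $M_{\alpha}/M$, where $M_{\alpha}$ is the number of sub-cells lying in $\mc{C}_{\alpha}$. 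By construction $M_{\alpha}\propto\int_{\mc{C}_{\alpha}}r^2\,\De\mu$ and $M\propto\int_{\mc{C}}r^2\,\De\mu=1$, so $M_{\alpha}/M\to\int_{\mc{C}_{\alpha}}r[\phi]^2\,\De\mu[\phi]=\bra{\Psi}\obs{P}_{\alpha}\ket{\Psi}$ as the partition refines and the sub-cell amplitude tends to zero, which is the Born rule~\eqref{eq:born_rule}.

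I expect the fine-graining step and its continuum limit to be the main obstacle. The crux is justifying that the correct ``number of ontic states'' in a region is governed by $r^2\,\De\mu$ rather than by $\De\mu$ or by the density $r$ of Equation~\eqref{eq:measure_r}: the squared amplitude appears precisely because making the amplitudes uniform forces each orthonormal sub-cell to carry $\mu$-measure $\propto r^{-2}$, so that the sub-cell count scales as $r^2$. Secondary technical points are controlling the error from treating $r$ as constant on cells, handling the delta-normalization of the $\ket{\phi}$ through the regularizing cells $\ket{\chi_i}$, and ensuring the ratios $M_{\alpha}/M$ converge as the partition is refined.
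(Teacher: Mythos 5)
Your proposal is correct and takes essentially the same route as the paper: the paper's proof also reduces to the finite equal-amplitude identity~\eqref{eq:finite_case_proof} by partitioning $\mc{C}$ into measurable cells $D_{n,k}$ of equal Born weight $\int_{D_{n,k}}r^2[\phi]\,\De\mu[\phi]=2^{-n}$, so that $\ket{\Psi}=2^{-n/2}\sum_k\ket{n,k}$ with $\ket{n,k}:=\sqrt{2^n}\,\obs{P}_{n,k}\ket{\Psi}$ orthonormal, and then counts cells per macrostate in the refinement limit. Your two-stage construction (cells where $r$ is nearly constant, then subdivision into equal-weight sub-cells of $\mu$-measure $\propto r^{-2}$) is just an explicit way of producing the paper's posited equal-weight partition, and it usefully spells out why the count is governed by $r^2\,\De\mu$, which the paper builds into the definition of its partition and otherwise defers to a reference.
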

\begin{proof}
The macroprojectors consistent with Postulate \ref{pp:microstates} have the form
\begin{equation}
\label{eq:macroprojectors-integral}
\obs{P}_{\alpha}=\int_{\mc{C}_{\alpha}}\ket{\phi}\bra{\phi}\De\mu[\phi],
\end{equation}
where the set $\mc{C}_{\alpha}\subset\mc{C}$ is $\mu$-measurable.
For any unit vector $\ket{\Psi}\in\mc{C}$, there is an infinite sequence $(\mc{P}_n)_{n\in\N}$ of sets of projectors with the following properties:

(i) Each projector from $\mc{P}_n$ has the form $\obs{P}_{n,k}=\int_{D_{n,k}}\ket{\phi}\bra{\phi}\De\mu[\phi]$, where $\mc{C}=\bigsqcup_{k=1}^{2^n}D_{n,k}$ is a partition of $\mc{C}$ into measurable subsets so that $\int_{D_{n,k}}r^2[\phi]\De\mu[\phi]=1/{2^n}$.

(ii) For each $n$, $\mc{P}_{n+1}$ refines $\mc{P}_n$, \ie projectors from $\mc{P}_n$ are sums of those from $\mc{P}_{n+1}$.

(iii) The measure of the sets $D_{n,k}$ included in $\mc{C}_{\alpha}$ converges to the measure of $\mc{C}_{\alpha}$.

Then, from (i) and (ii), for each $n$, $\ket{\Psi}$ decomposes as $\ket{\Psi}=1/\sqrt{2^n}\sum_{k=1}^{2^n}\ket{n,k}$, where $\ket{n,k}:=\sqrt{2^n}\obs{P}_{n,k}\ket{\Psi}$ are orthogonal unit vectors.
From (iii), the sequence $(\mc{P}_n)_{n\in\N}$ converges to a refinement of the set of macro-projectors $(\obs{P}_{\alpha})_{\alpha\in\mc{A}}$.
Hence, the continuous limit of a counting as in \eqref{eq:finite_case_proof} gives the Born rule.
For more details, see \cite{Stoica2022BornRuleQuantumProbabilityAsClassicalProbability}.
\end{proof}

Then, due to Postulate \ref{pp:microstates}, the Born rule is obtained as a probability measure over the ontic states.
This is possible because $\wt{\mc{C}}$ becomes a \emph{sample space}, $(\wt{\mc{C}}_{\alpha})_{\alpha\in\mc{A}}$ an \emph{event space}, and $P:(\wt{\mc{C}}_{\alpha})_{\alpha\in\mc{A}}\to[0,1]$, $P(\wt{\mc{C}}_{\alpha})=\int_{\wt{\mc{C}}_{\alpha}}r^2[\wt{\phi}]\De\mu$ a \emph{probability function}.
Therefore, $\(\wt{\wt{\mc{C}}},(\wt{\mc{C}}_{\alpha})_{\alpha\in\mc{A}},\wt{\mc{C}}_{\alpha}\mapsto\int_{\wt{\mc{C}}_{\alpha}}r^2[\wt{\phi}]\De\mu\)$ becomes a classical \emph{probability space}.
At any instant in time, the probability density $|\Psi[\phi]|^2$ on $\wt{\mc{C}}$ can be interpreted similarly to the probability density on the phase space from classical physics.
If only one microstate exists, but it is unknown, the probability is subjective.
If more microstates can coexist simultaneously, it can be interpreted as self-location probability.
This answers Question \ref{q:probabilities}.

\section{Wavefunction Collapse Is Inconsistent with Our Derivation of the Born Rule}
\label{s:collapse}

It may seem that we can interpret Equation~\eqref{eq:check-normalized} probabilistically in two different ways and get the Born rule \eqref{eq:born_rule}.
The subjective view applies if there is only one world whose microstate is unknown to the agent, and the wavefunction collapses to be consistent with Postulate \ref{pp:macroclassicality}.
The self-location uncertainty view applies if there are many worlds, but the agent does not know in which of them they are located.

Now we will see that SQM, which assumes wavefunction collapse, is inconsistent with Postulate \ref{pp:microstates} and, therefore, with our derivation of the Born rule.
In SQM, $\ket{\Psi(t)}$ is a microstate at all times. Whenever it evolves into a linear combination over more macrostates it collapses to one of them to ensure  consistency with Postulate \ref{pp:macroclassicality}.

However, if there is only one world that collapses to avoid macroscopic superpositions, it should be allowed to be in states that do not belong to the same basis.
To see this, let us look again at Equation~\eqref{eq:measurement}.
It assumes that at $t_0$
\begin{equation}
\label{eq:non-classical-state}
\ket{\Psi(t_0)}
=\ket{\psi}\otimes\ket{\zeta^{\pobs{A}}_{0}}\otimes\ldots.
\end{equation}

The vector $\ket{\psi}\in\hilbert_S$ can be any unit vector from $\hilbert_S$. Let $\ket{\psi'}\in\hilbert_S$ be another unit vector.
Then, the total state vector is $\ket{\Psi'(t_0)}
=\ket{\psi'}\otimes\ket{\zeta^{\pobs{A}}_{0}}\otimes\ldots$. 
In particular, there are vectors $\ket{\psi},\ket{\psi'}\in\hilbert_S$ so that, at $t_0$, $\braket{\psi}{\psi'}\neq 0$, which implies $\braket{\Psi(t_0)}{\Psi(t_0)'}\neq 0$.
The states $\ket{\Psi(t_0)}$ and $\ket{\Psi'(t_0)}$ are distinct microstates of the same macrostate in which the pointer state is $\ket{\zeta^{\pobs{A}}_{0}}$. Since, in SQM, the world is allowed to be in any of them, and they are not orthogonal, the world is not restricted to be only in the states from an orthogonal basis. This contradicts Postulate \ref{pp:microstates}, so the derivation of the Born rule from Theorem \ref{thm:born-rule} does not seem to apply to SQM.
The following proposition shows this.

\begin{proposition}
\label{thm:overcoungting}
If any state from a macrostate should be counted as a world, the proof of Theorem \ref{thm:born-rule} cannot be used to derive the Born rule.
\end{proposition}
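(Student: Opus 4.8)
The plan is to show that the entire counting mechanism in the proof of Theorem~\ref{thm:born-rule} rests on the \emph{orthogonality} of the ontic basis, and that this orthogonality is precisely what is lost once one is required to count \emph{every} state of a macrostate as a world. First I would pinpoint where orthogonality enters. The macroprojectors are written as $\obs{P}_{\alpha}=\int_{\mc{C}_{\alpha}}\ket{\phi}\bra{\phi}\De\mu[\phi]$, and a partition $\mc{C}=\bigsqcup_{k=1}^{2^n}D_{n,k}$ into disjoint measurable cells of equal weight $\int_{D_{n,k}}r^2[\phi]\De\mu[\phi]=1/2^n$ produces projectors $\obs{P}_{n,k}$ with mutually orthogonal ranges. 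This orthogonality is exactly what makes $\ket{n,k}:=\sqrt{2^n}\obs{P}_{n,k}\ket{\Psi}$ an orthonormal family, so that $\ket{\Psi}$ decomposes into $2^n$ equally weighted, mutually exclusive worlds; the counting in Equation~\eqref{eq:finite_case_proof} is meaningful only because these worlds are orthogonal.

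Second I would show that the hypothesis of the proposition destroys this structure. If any unit vector of $\hilbert_{\alpha}$ is to be counted as a world, then the candidate worlds inside a macrostate form the entire projective unit sphere of $\hilbert_{\alpha}$, a continuum of pairwise non-orthogonal vectors. The preceding discussion already exhibits such a pair explicitly: the states $\ket{\Psi(t_0)}$ and $\ket{\Psi'(t_0)}$ lie in the same macrostate yet satisfy $\braket{\Psi(t_0)}{\Psi'(t_0)}\neq 0$. Hence no partition into disjoint cells can reproduce the orthogonal decomposition used above: a finite orthonormal family $(\ket{n,k})_k$ spans only a finite-dimensional subspace and therefore leaves uncounted almost all of the overlapping states the hypothesis insists on counting, while including those overlapping states breaks orthogonality and with it the equal-weight interpretation.

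The main obstacle is to make precise the sense in which ``counting'' \emph{fails} rather than merely changes the answer. I expect the cleanest route is to observe that a counting measure weighing every world equally must treat the continuum of states in each nontrivial $\hilbert_{\alpha}$ symmetrically, so that cardinality alone cannot discriminate between macrostates of different Born weight $\bra{\Psi}\obs{P}_{\alpha}\ket{\Psi}$; the construction of Theorem~\ref{thm:born-rule} evaded this precisely by singling out one distinguished orthonormal family, an option no longer available once all states count. This establishes that the proof of Theorem~\ref{thm:born-rule} cannot be carried through under the stated hypothesis, which is what the proposition asserts.
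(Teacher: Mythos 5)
Your first two paragraphs correctly locate where orthogonality enters Theorem \ref{thm:born-rule}, and your closing intuition---that counting every state of a macrostate treats all macrostates symmetrically and therefore cannot track $\bra{\Psi}\obs{P}_{\alpha}\ket{\Psi}$---is in fact the central idea of the paper's own proof. The problem is that in your write-up this decisive claim is only asserted: you yourself flag it as ``the main obstacle'' and say you ``expect'' the symmetric-treatment observation to close it. As it stands, the argument is open to the objection that a measure over the enlarged set of candidate worlds need not be blind to $\ket{\Psi}$; one could, for instance, weight each candidate world in $\hilbert_{\alpha}$ by its overlap with $\obs{P}_{\alpha}\ket{\Psi}$, and then the count \emph{would} discriminate between macrostates of different Born weight. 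Similarly, your second-paragraph observation that a finite orthonormal family leaves almost all of the non-orthogonal states uncounted only shows that the specific construction of Theorem \ref{thm:born-rule} does not cover the enlarged world-set; it does not rule out that \emph{some} way of counting that enlarged set reproduces the Born rule, which is what the proposition requires one to exclude.

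The paper closes precisely this gap with a quantitative symmetry argument. It introduces a probability measure $\mf{p}$ on sets of state vectors counted as worlds, and takes as the world-set $S_{n,\alpha}$ the orbit of the decomposition vectors $\ket{n,k}\in\hilbert_{\alpha}$ under all unitaries of $\hilbert_{\alpha}$ preserving $\obs{P}_{\alpha}\ket{\Psi}$. It then establishes two things. First, for macrostates with equal amplitudes, consistency with the Born rule forces the unitary $\obs{S}$ carrying the line $\mathbb{C}\obs{P}_{\beta}\ket{\Psi}$ into $\mathbb{C}\obs{P}_{\alpha}\ket{\Psi}$ to satisfy $\obs{S}\hilbert_{\beta}=\hilbert_{\alpha}$ exactly: a strict inclusion would give $\mf{p}(\hilbert_{\alpha})>\mf{p}(\hilbert_{\beta})$ by an additivity argument applied to two images $\obs{S}\hilbert_{\beta}\neq\obs{S}'\hilbert_{\beta}$ whose intersection has measure zero. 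Second---and this is the step your proposal lacks---the angle estimate $\cos\omega_{n,\alpha}=1/\sqrt{nN_{n,\alpha}}\to 0$ shows that as $n\to\infty$ every vector counted as a world becomes orthogonal to $\obs{P}_{\alpha}\ket{\Psi}$, so the sets $S_{n,\alpha}$ retain no trace of the amplitude, and unitary symmetry then forces $\mf{p}(S_{n,\alpha})/\mf{p}(S_{n,\beta})\to 1$ even when $|\obs{P}_{\alpha}\ket{\Psi}|>|\obs{P}_{\beta}\ket{\Psi}|$. This equality of measures is the explicit contradiction with the Born rule. To complete your argument you would need an analogue of this step: a precise class of admissible measures on the enlarged world-set, together with a proof that every such measure assigns to the macrostates weights independent of $\bra{\Psi}\obs{P}_{\alpha}\ket{\Psi}$.
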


The proof is given in Appendix \sref{appendix:thm:overcoungting:proof}.

If, to keep Postulate \ref{pp:microstates}, we assume that there is only a single world that is always in an ontic state, Postulate \ref{pp:macroclassicality} will be satisfied without invoking the wavefunction collapse. However, this would be a single-world unitary theory \cite{schulman1997timeArrowsAndQuantumMeasurement,tHooft2016CellularAutomatonInterpretationQM,Stoica2021PostDeterminedBlockUniverse}, and this is possible only if the initial conditions are very strongly fine-tuned \cite{Stoica2012QMQuantumMeasurementAndInitialConditions}, violating Bell's statistical independence assumption \cite{Bell2004SpeakableUnspeakable}.
Even if this would mean something like superdeterminism, conspiracy, retrocausality, or global consistency \cite{Stoica2012QMGlobalAndLocalAspectsOfCausalityInQuantumMechanics}, it is a possibility.

We can try a modified version of Postulate \ref{pp:microstates}: \emph{``Linear combinations of ontic states can exist as long as they belong to the same macrostate. When they belong to more macrostates, collapse is invoked so that the resulting microstate is from $(\ket{\phi})_{\phi\in\mc{C}}$.''}
However, when the collapse is invoked for a measurement of $S$, and a measurement of a different subsystem $S'$ follows immediately, the subsystem $S'$ can also be in any state at the same time when the collapse is invoked for system $S$. This contradicts the modified version of Postulate \ref{pp:microstates}. We can try to modify it more: \emph{``Linear combinations of ontic states can exist as long as they belong to the same macrostate. When they belong to more macrostates, collapse is invoked, but all ontic states in the macrostate that remains after the collapse are preserved.''} This works, but it requires the self-location interpretation of probabilities, and it would be a version of MWI where some of the worlds disappear, and the remaining ones are macroscopically indistinguishable, an \emph{ad hoc} strategy.
Since after recording the results of the measurements, the worlds from different macrostates no longer interfere anyway, why postulate the disappearance of some of them?
It follows that the only consistent and natural way to satisfy the conditions required by the proof of Theorem \ref{thm:born-rule} is the MWI.
This suggests an answer to Question \ref{q:interpretation}.

\section{What Should Be Counted as a World?}
\label{s:what-counts}

The question ``what should be counted as a world?'' has two meanings:

\emph{Meaning 1.} What kinds of unit vectors in the Hilbert space count as worlds?

\emph{Meaning 2.} What components of the wavefunction should be counted when we calculate the probabilities?

However, the answer to both these questions is the same, Postulate \ref{pp:microstates}.

However, since linear combinations of ontic vectors $\ket{\phi}$ from the same $\hilbert_{\alpha}$ also belong to $\hilbert_{\alpha}$, they are quasiclassical, and maybe they should be counted as worlds too.
This happens, for example, if we try to prove the Born rule by finding a finite number of orthonormal vectors for the macrostates that add up to $\ket{\Psi}$, as in Equation~\eqref{eq:finite_case_proof}, and counting them, as in \cite{Everett1957RelativeStateFormulationOfQuantumMechanics,Saunders2021BranchCountingInTheEverettInterpretationOfQuantumMechanics}.
If the basis $(\ket{\phi_k})_{k\in\{1,\ldots,n\}}$ from Equation~\eqref{eq:finite_case_proof} depends on $\ket{\Psi}$, this implies that we have to interpret all such possible orthogonal systems as consisting of words. Proposition \ref{thm:overcoungting} shows that this leads to overcounting, and it cannot give the Born rule.
However, \mbox{Theorem \ref{thm:born-rule}} shows that in the continuous case, if we use the same basis, in agreement with Postulate \ref{pp:microstates}, this works. Therefore, Theorem \ref{thm:born-rule} can be understood as the continuous limit of the proposal from \cite{Saunders2021BranchCountingInTheEverettInterpretationOfQuantumMechanics}, necessarily amended with Postulate \ref{pp:microstates}.

Can Postulate \ref{pp:microstates} be avoided by defining the worlds differently?

\emph{The worlds cannot be the macrostates} because this will give the naive branch counting according to which all outcomes with nonvanishing amplitude have the same probability.

\emph{Can the worlds be the nonvanishing components $\obs{P}_{\alpha}\ket{\Psi}$ of $\ket{\Psi}$?} It seems that they cannot be, for the same naive branch-counting argument. However, we can reinterpret probability in a decision-theoretic way as in \cite{Deutsch1999QuantumTheoryOfProbabilityAndDecision,Wallace2002QuantumProbabilitiesAndDecisionRevisited}, or as a \emph{measure of existence} as in \cite{Vaidman2012ProbabilityInMWI}, or other arguments that the size of $\obs{P}_{\alpha}\ket{\Psi}$ matters so that its square is the probability.
It can be argued that Theorem \ref{thm:born-rule} offers an alternative to these new interpretations of probability. It can also be argued that Theorem \ref{thm:born-rule} is consistent with them, and it only shows that they can be understood as a coarse-graining of a more conservative probability, that of self-location in the ontic states.

\section{The 3D Geometry as the Preferred Basis}
\label{s:background-freedom}

Several important approaches to quantum gravity are background-free.
We will see that background freedom brings strong evidence for the existence of an ontic basis, as in Postulate \ref{pp:microstates}, but based on $3$D space geometry.

\emph{Canonical quantum gravity}, as formulated in \cite{Dewitt1967QuantumTheoryOfGravityI_TheCanonicalTheory} is based on quantizing Einstein's equation expressed in $3+1$ dimensions $\Sigma\times\R$ as in \cite{ADM1962TheDynamicsOfGeneralRelativity}. 
Since after quantization, time seems to disappear, the time-evolving wavefunction is decoded from the Wheeler-de Witt constraint equation by using the Page--Wootters formalism \cite{PageWootters1983EvolutionWithoutEvolution}.
The result is a wavefunctional formulation, in which the configuration space of classical fields includes the components of the metric tensor on the $3$D space $\Sigma$.
The theory is invariant to diffeomorphisms, similar to gauge invariance. This makes it background-free.

The classical configuration space consists of fields $\phi=(\gamma_{ab},\phi_1,\ldots,\phi_n)\in\mc{C}$, where $a,b\in\{1,2,3\}$, $\gamma=(\gamma_{ab})$ contains the components of the $3$D metric, and $\ket{\varphi}$ represents the matter fields on $\Sigma$ and any other fields that may be needed by the theory. Let $\mc{C}_S$ be the configuration space of $3$D metrics up to diffeomorphisms, and $\mc{C}_M$ the configuration space of matter fields, so that $\mc{C}=\mc{C}_S\times\mc{C}_M$.

A state vector with classical geometry has the form $\ket{\Psi}=\ket{\gamma}\ket{\varphi}$, where $\ket{\varphi}$ is a general quantum state of matter.
Because of the invariance to diffeomorphisms, there is no correspondence between the points of $(\Sigma,\gamma_1)$ and those of $(\Sigma,\gamma_2)$, except in the special case when they are isometric.
For any linear combination of states with classical geometries, there are infinitely many sets of field operators $(\wh{\phi}_j(\xThree),\wh{\pi}_j(\yThree))_{j}$ that satisfy the canonical (anti)commutation relations. They depend on the relative diffeomorphisms of the $3$D spaces of the states in the linear combination.
It is possible to fix such a set of field operators, but this would make the theory background-dependent.
This is why, in background-free quantum gravity, even though the vector $c_1\ket{\gamma_1}\ket{\varphi_1}+c_2\ket{\gamma_2}\ket{\varphi_2}$ exists in $\hilbert$, in general, it represents \emph{dissociated states} with distinct geometries and not a superposition of two states on $\Sigma$.

This dissociation becomes even more evident if the theory of quantum gravity has a discrete $3$D space or spacetime because, in this case, the underlying graphs or hypergraphs of the states in a linear combination can be nonisomorphic, so a correspondence between their points is not even possible.
Examples of background-free approaches to quantum gravity in which space or spacetime is discrete include \emph{causal sets} \cite{Sorkin1990SpacetimeAndCausalSets}, \emph{Regge calculus} \cite{Regge1961GeneralRelativityWithoutCoordinates}, \emph{causal dynamical triangulations} \cite{Loll2019QuantumGravityFromCausalDynamicalTriangulationsAReview}, the spin network formulation of \emph{loop quantum gravity}~\cite{RovelliSmolin1995SpinNetworksAndQuantumGravity,AshtekarBianchi2021AShortReviewOfLoopQuantumGravity}, \etc.
In these approaches, the $3$D space $\Sigma$ or the spacetime is a graph or a hypergraph with values attached to their vertices and (hyper-)edges to encode the metric, curvature, or spins, depending on the approach.
All these approaches can be described in the {\schrod} formulation. The classical fields $\phi\in\mc{C}$ have to include the possible configurations of $\Sigma$.
In the discrete approaches, graphs or hypergraphs representing $\Sigma$ are not assumed to be embedded in a $3$D manifold.
Therefore, they are background-free, in the sense that only the intrinsic properties of $\Sigma$ matter \cite{Smolin2006TheCaseForBackgroundIndependence}.

The problem of superpositions of states with different classical geometries was discussed, for example, in \cite{Anandan1994InterferenceOfGeometriesInQuantumGravity,Penrose1996OnGravitysRoleInQuantumStateReduction}.
However, maybe this is not a bug but a feature of background-free quantum gravity. We claim that this dissociation leads to a new version of MWI \cite{Stoica2022BackgroundFreedomLeadsToManyWorldsLocalBeablesProbabilities}.

\begin{Observation}
\label{obs:no-superpositions}
Due to the background freedom, linear combinations $c_1\ket{\gamma_1}\ket{\varphi_1}+c_2\ket{\gamma_2}\ket{\varphi_2}$ cannot be interpreted, in general, as superpositions.
\end{Observation}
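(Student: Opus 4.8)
The plan is to replace the informal notion of ``superposition'' with an operational, diffeomorphism-invariant criterion, and then to identify the diffeomorphism gauge freedom as precisely what obstructs that criterion for $c_1\ket{\gamma_1}\ket{\varphi_1}+c_2\ket{\gamma_2}\ket{\varphi_2}$. First I would fix the criterion: a formal linear combination deserves to be called a genuine superposition only if there is some physical observable $\obs{O}$ whose expectation value on the combined state contains an interference cross-term $\propto\mathrm{Re}\(c_1^{*}c_2\,\bra{\gamma_1}\bra{\varphi_1}\obs{O}\ket{\gamma_2}\ket{\varphi_2}\)$ that depends on the relative phase $\arg(c_1^{*}c_2)$. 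This is exactly the content that lets two branches interfere and recohere, and it is what separates a superposition from a disjoint, dissociated pair of states. The local field operators $\wh{\phi}_j(\xThree)$ and their conjugates $\wh{\pi}_j(\yThree)$ are the natural candidates for such $\obs{O}$.

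The core step is then to show that no such invariant cross-term exists in the background-free setting. Evaluating $\bra{\gamma_1}\bra{\varphi_1}\obs{O}\ket{\gamma_2}\ket{\varphi_2}$ with a local operator requires identifying the point $\xThree$ of $(\Sigma,\gamma_1)$ with a point of $(\Sigma,\gamma_2)$; otherwise $\wh{\phi}_j(\xThree)$ acts on incomparable arguments in the two factors. As recalled in the surrounding discussion, diffeomorphism invariance together with the generic non-isometry of $(\Sigma,\gamma_1)$ and $(\Sigma,\gamma_2)$ means there is no canonical identification: the sets of field operators obeying the canonical (anti)commutation relations form an infinite family labeled by the relative diffeomorphisms between the two slices. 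I would then show that the cross-term is a genuinely non-constant function of this choice, so that no diffeomorphism-invariant interference survives; equivalently, the relative phase $\arg(c_1^{*}c_2)$ carries no invariant physical meaning, while the only invariant data are the two geometries-with-matter taken separately. Hence the combination is physically indistinguishable from the dissociated pair $\{\ket{\gamma_1}\ket{\varphi_1},\ket{\gamma_2}\ket{\varphi_2}\}$ and cannot be read as a coherent superposition.

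I would treat the discrete approaches as a sharper, limiting instance of the same mechanism: when the graphs or hypergraphs underlying $\gamma_1$ and $\gamma_2$ are nonisomorphic, no point identification exists at all, the cross-term is simply undefined, and the obstruction is total rather than merely gauge-ambiguous.

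The main obstacle I anticipate is conceptual rather than computational: giving a background-free definition of ``superposition'' that does not smuggle in the very point identification whose absence drives the argument, so that interference is phrased entirely within the diffeomorphism-invariant observable algebra. One then has to isolate the isometric case $(\Sigma,\gamma_1)\cong(\Sigma,\gamma_2)$, where the isometry supplies a canonical identification and a genuine superposition is recovered, as the precise exception meant by the qualifier ``in general.'' Establishing that the dependence of the cross-term on the choice of relative diffeomorphism is genuine---so that averaging or gauge-fixing cannot restore an invariant interference pattern---is the technical heart of the argument.
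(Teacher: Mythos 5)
Your proposal takes essentially the same route as the paper: the paper's supporting argument likewise rests on the fact that diffeomorphism invariance leaves no canonical identification between the points of $(\Sigma,\gamma_1)$ and $(\Sigma,\gamma_2)$ except in the isometric case, so that the field operators $(\wh{\phi}_j(\xThree),\wh{\pi}_j(\yThree))_j$ — and hence any interference terms built from them — depend on an arbitrary choice of relative diffeomorphism that cannot be fixed without breaking background freedom, with the discrete (nonisomorphic graph or hypergraph) case as the extreme instance where no identification exists at all. Your operational cross-term criterion for what counts as a ``superposition'' merely makes explicit what the paper's informal argument leaves implicit, and your isolation of the isometric exception matches the paper's ``in general'' qualifier.
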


A state $\ket{\Psi}=\ket{\gamma}\ket{\varphi}$ with classical geometry immediately evolves into a linear combination of states with distinct geometries. This means that the basis $(\ket{\gamma})_{\gamma\in \mc{C}_S}$ determines an absolute branching structure. The wavefunctional evolves on the configuration space, and its branches can interfere again. Therefore, dissociated states can reassociate.

The existence of dissociation into states with different classical $3$D geometries due to the absence of superpositions would make a much stronger case for the existence of ontic states.
In this case, the $3$D space metric of the ontic states has to be classical, so they are of the form $\ket{\gamma}\ket{\phi}$.

Whether or not quantum gravity has to be background-free in this way remains to be seen. Even if it were background-dependent, the states with classical $3$D space form a special basis, consistent with our experience and with all the experiments conducted so far. Therefore, they deserve to be considered ontic states.

\section{The 3D Geometry and the Branching Structure}
\label{s:big-bang}

To prevent violations of the Born rule in the MWI, distinct worlds should not interfere again. Branching has to occur only toward the future. It is often believed that decoherence answers Question \ref{q:branching-asymmetry}, but unitary evolution is time-symmetric, so the initial conditions should break this symmetry to ensure branching only toward the future.
There are strong reasons to believe that the low entropy of the initial state of the universe, postulated to explain the Second Law of Thermodynamics, also explains branching asymmetry \cite{Wallace2012TheEmergentMultiverseQuantumTheoryEverettInterpretation}. However, we do not have a satisfactory answer for the initial low entropy either.

However, quantum gravity reveals a strong connection between the branching asymmetry and the cosmological arrow of time, \ie the Big Bang followed by the expansion of the universe.

The Big Bang singularity consists of the fact that the $3$D space metric vanishes as $t\to 0$ \cite{Stoica2011FLRWBigBangSingularitiesAreWellBehaved}.
It is often believed that classical general relativity breaks down at singularities. However, there is a formulation of general relativity whose equations do not break down for a large class of singularities. Its equations are equivalent to Einstein's outside singularities but remain finite at singularities \cite{Stoica2013SingularGeneralRelativityPhDThesis}. Such ``benign'' singularities require that the matter fields are constant in the directions in which the metric tensor is degenerate. This means that, since $\gamma_{ab}\to 0$ in all directions as $t\to 0$, the matter fields have to become constant on the $3$D space $\Sigma$. The set of possible classical fields consistent with this condition is described by a very small number of parameters.
The wavefunctional is, therefore, constrained initially to a small subspace of the Hilbert space, a single macrostate of very low entropy.
The wavefunctional gradually expands and spreads over more and more, larger and larger~macrostates.

This explanation makes sense even if our quantum-gravitational universe is not background-free. However, since at the Big Bang singularity, there is a unique $3$D space geometry $\gamma_{ab}=0$, the state is fully associated.
Since background freedom implies that $\Psi$ dissociates as it evolves, it seems to give a stronger reason for the time asymmetry of the branching structure than the background-dependent theories.

\section{Conclusions}
\label{s:conclusions}

We have seen that the wavefunctional formulation of quantum field theory comes implicitly with a natural interpretation of $\Psi$ in $3$D space.
This has implications for several different problems in quantum mechanics.
The central implication is that it provides an ontology in terms of local beables.
This ontology requires a preferred basis, the ontic basis.
Since we can only  directly observe the macrostates, the ontology of the ontic microstates justifies counting them as possible states in which the system is, just like in classical physics.
However, unlike classical physics, in quantum mechanics, a state can evolve into a linear combination of microstates. The local beable ontology of the wavefunctional suggests interpreting these linear combinations as multiple ontic states coexisting in parallel. Since a macrostate is an equivalence class of microstates, probabilities arise by taking into account the possible microstates in each macrostate. It turns out that this probability satisfies the Born rule.

If there were a single ontic world, this probability would be subjective, representing the uncertainty about the microstate. However, we have seen that, even in the standard interpretation of quantum mechanics, multiple ontic states have to coexist in parallel. Therefore, the probability should be about the self-location of the agent in one of the microstates.
It follows that a new version of MWI is unavoidable in this framework. In this version of MWI, because the ontic states are orthogonal,  the agent can exist only in an ontic state, and the macrostates can consist of a different amount of microstates, probabilities appear from the agent's self-location uncertainty about the microstate.

If background freedom is a feature of quantum gravity, it implies that the wavefunctional dissociates into states with distinct but classical $3$D geometries. This gives strong additional support to the big picture described above. In addition, quantum gravity suggests that the Big Bang singularity may explain the time asymmetry of the branching structure because at the Big Bang singularity, the state is not dissociated, all of its components having the same geometry $\gamma_{ab}=0$ and constant fields. As the universe evolves, it spreads over more and more macrostates, so the wavefunctional branches more and more.

\vspace{6pt}

\acknowledgments{The author thanks Paul Tappenden for very helpful feedback and to the reviewers for valuable comments and suggestions offered to a previous version of the manuscript. Nevertheless, the author bears full responsibility for the article.
}

\appendix
\phantomsection

\section{The Wavefunction as an Object in 3D Space}
\label{appendix:3d-space:wavefunction}

In NRQM, the wavefunction for $\n$ particles is defined on the configuration space $\Sigma^{\n}$, and it can be expressed as $\n$ functions on $\Sigma$ only in the absence of entanglement.

However, in NRQM, the wavefunction is also an object of Euclidean geometry. A figure consisting of triangles and other polygons is an object of Euclidean geometry. This remains true if we label its vertices with complex numbers.  $\Psi(\xThree_1,\ldots,\xThree_\n)$ is equivalent to infinitely many figures consisting of $\n$ points in $\R^3$, each such figure $(\xThree_1,\ldots,\xThree_\n)$ being labeled with the complex number $\Psi(\xThree_1,\ldots,\xThree_\n)$. We can also interpret labeled figures as unlabeled figures in a complex line bundle over $3$D space \cite{Stoica2021WhyTheWavefunctionAlreadyIsAnObjectOnSpace}.

The wavefunction is an object of Euclidean geometry also, according to Klein's Erlangen program \cite{Stoica2021WhyTheWavefunctionAlreadyIsAnObjectOnSpace,klein1872ErlangenProgram}.
Moreover, if we apply Klein's ideas to quantum theory and require the Hilbert space to be a representation of the Galilei group or the Poincar\'e group, as Wigner and Bargmann did, we get that the wavefunction is an object of spacetime, the classification of the types of particles by spin and rest mass, and the free evolution equations as in quantum theory \cite{Wigner1931GruppentheorieUndIhreAnwendungAufDieQuantenMechanikDerAtomspektren,Wigner1959GroupTheoryAndItsApplicationToTheQuantumMechanicsOfAtomicSpectra,Bargmann1964NoteOnWignerTheoremSymmetryOperations}. For more details, see \cite{Stoica2021WhyTheWavefunctionAlreadyIsAnObjectOnSpace}.

Moreover, it is also possible to represent the wavefunction as a vector field with infinitely many components on $\Sigma$.
In \cite{Stoica2019RepresentationOfWavefunctionOn3D}, it was shown that the usual tensor product of functions defined on $3$D space can be represented as a direct sum by using an additional global gauge symmetry. By direct sums between these vector bundles subject to gauge equivalence, the full tensor product Hilbert space can be represented as a vector field. Since the resulting representation is redundant, the redundancy is removed by using an even larger global gauge symmetry. Then, this global gauge symmetry can be made local by introducing a flat connection for its group. This allows the field representing $\Psi$ to be locally separable in the sense that it can be changed in an open subset $A$ of $\Sigma$ without affecting its values outside of $A$.
The Hamiltonian is local, and the field evolves locally as long as no wavefunction collapse is assumed to take place.

This representation also applies to quantum field theory in the Fock representation.
It is a faithful representation of $\Psi$, which can, therefore, be seen as consisting of local beables.
However, this representation is artificial and was given in \cite{Stoica2019RepresentationOfWavefunctionOn3D} only as a proof of concept.
The natural representation is given in Sections \sref{s:3d-space} and \sref{s:ontology}.

\section{The Existence of a Measure on the Configuration Space of Classical Fields}
\label{appendix:measure}

If the configuration space of classical fields $\mc{C}$ were an infinite-dimensional manifold, no analog of the Lebesgue measure could be defined on it (although other measures are possible \cite{HuntEtal1992PrevalenceATranslationInvariantAlmostEveryOnInfiniteDimensionalSpaces}). However, there are indications that the dimension of $\mc{C}$ is finite: the fields are constrained by equations, the gauge degrees of freedom need to be factored out, the \emph{entropy bound} indicates that the Hilbert space has a finite number of dimensions in bounded regions of space \cite{Bekenstein1981UniversalUpperBound,Bekenstein2005HowDoesEntropyInformationBoundWork}, and the arrow of time requires severe additional constraints \cite{Stoica2022DoesQuantumMechanicsRequireConspiracy}. Therefore, we will assume that the manifold $\mc{C}$ is finite-dimensional if this is what it takes for it to be compatible with a measure $\mu$.

\section{Possible Worlds Should Form a Basis}
\label{appendix:thm:overcoungting:proof}

\begin{proof}[Proof of Proposition \ref{thm:overcoungting}]
For every $n$, let $\ket{\Psi}=1/\sqrt{n}\sum_{k=1}^n\ket{n,k}$ be a decomposition of $\ket{\Psi}$ in orthonormal vectors, so that, as $n\to\infty$, $N_{n,\alpha}/n$ converges to $\bra{\Psi}\obs{P}_{\alpha}\ket{\Psi}$, where $N_{n,\alpha}=\{k\in\{1,\ldots,n\}|\ket{n,k}\in\hilbert_{\alpha}\}$.
Let $S_{n,\alpha}$ be the set of vectors obtained from $\ket{n,k}$ by all unitary transformations of $\hilbert_{\alpha}$ that preserve $\obs{P}_{\alpha}\ket{\Psi}$.
Unitary symmetry implies that any vector from $S_{n,\alpha}$ belongs to orthogonal systems similar to $\{\ket{n,k}|k\in N_{n,\alpha}\}$.
Therefore, by the hypothesis of Proposition \ref{thm:overcoungting}, they should be counted as worlds.
Let $\mf{p}(\mc{S})$ denote the probability measure of a set $\mc{S}\subseteq\hilbert$ of state vectors counting as worlds.
Let $\alpha\neq\beta\in\mc{A}$ so that $|\obs{P}_{\alpha}\ket{\Psi}|=|\obs{P}_{\beta}\ket{\Psi}|\neq 0$.
Due to unitary symmetry, there is a unitary transformation $\obs{S}$ that maps the line $\mathbb{C}\obs{P}_{\beta}\ket{\Psi}\subset\hilbert_{\beta}$ to the line $\mathbb{C}\obs{P}_{\alpha}\ket{\Psi}\subset\hilbert_{\alpha}$, so that either $\obs{S}\hilbert_{\beta}=\hilbert_{\alpha}$, or $\obs{S}\hilbert_{\beta}\subsetneq\hilbert_{\alpha}$, or $\hilbert_{\alpha}\subsetneq\obs{S}\hilbert_{\beta}$.
The symmetry requires that $\mf{p}(\obs{S}\hilbert_{\beta})=\mf{p}(\hilbert_{\alpha})$.
It also allows the existence of infinitely many such transformations. Let $\obs{S}'$ be another one with the same properties so that $\obs{S}'\hilbert_{\beta}\neq\obs{S}\hilbert_{\beta}$. Since $\obs{S}\hilbert_{\beta}\cap\obs{S}'\hilbert_{\beta}$ is a strict subspace of $\obs{S}\hilbert_{\beta}$, $\mf{p}(\obs{S}\hilbert_{\beta}\cap\obs{S}'\hilbert_{\beta})=0$, and $\mf{p}(\obs{S}'\hilbert_{\beta})=\mf{p}(\obs{S}'\hilbert_{\beta}\setminus\obs{S}\hilbert_{\beta})=\mf{p}(\obs{S}\hilbert_{\beta}\setminus\obs{S}'\hilbert_{\beta})=\mf{p}(\obs{S}\hilbert_{\beta})$. Therefore, $\mf{p}(\hilbert_{\alpha})>\mf{p}(\obs{S}\hilbert_{\beta})+\mf{p}(\obs{S}'\hilbert_{\beta})>\mf{p}(\obs{S}\hilbert_{\beta})=\mf{p}(\hilbert_{\beta})$. However, according to the Born rule, $\mf{p}(\hilbert_{\alpha})=\mf{p}(\hilbert_{\beta})$.
It follows that the Born rule is satisfied only if $\obs{S}\hilbert_{\beta}=\hilbert_{\alpha}$ for every $\alpha\neq\beta\in\mc{A}$.
However, now we will show that, for $|\obs{P}_{\alpha}\ket{\Psi}|>|\obs{P}_{\beta}\ket{\Psi}|$, this contradicts the Born rule.
The angle $\omega_{n,\alpha}$ between $\ket{n,k'}$ and $1/\sqrt{n}\sum_{k\in N_{n,\alpha}}\ket{n,k}$ when $k'\in N_{n,\alpha}$ satisfies $\cos\omega_{n,\alpha}=\abs{\bra{n,k'}1/\sqrt{n N_{n,\alpha}}\sum_{k\in N_{n,\alpha}}\ket{n,k}}=1/\sqrt{n N_{n,\alpha}}$.
Therefore, as $n\to\infty$, $\omega_{n,\alpha}\to \pi/2$, for all $\alpha$. It follows that in the limit $n\to\infty$, $\mf{p}(S_{n,\alpha})/\mf{p}(S_{n,\beta})=1$.
Therefore, counting all vectors from the sets $S_{n,\alpha}$ as worlds contradicts the Born rule.
\end{proof}


\end{document}